\newtheorem{theorem}{Theorem}
\newtheorem{remark}{Remark}
\def\BibTeX{{\rm B\kern-.05em{\sc i\kern-.025em b}\kern-.08em
    T\kern-.1667em\lower.7ex\hbox{E}\kern-.125emX}}
\DeclareMathOperator*{\argmax}{arg\,max}
\begin{document}

\title{Knowledge-Aware Semantic Communication System Design}

\author{\IEEEauthorblockN{Sachin Kadam and Dong In Kim}
\IEEEauthorblockA{{Department of Electrical and Computer Engineering} \\
{Sungkyunkwan University (SKKU), Suwon 16419, Republic of Korea}\\
Email: sachinkadam@skku.edu, dikim@skku.ac.kr}
\thanks{This research was supported in part by the Korean Government (MSIT) under the ICT Creative Consilience program (IITP-2020-0-01821) supervised by the IITP (Institute for Information \& Communications Technology Planning \& Evaluation).}
}

\maketitle

\begin{abstract}
The recent emergence of 6G raises the challenge of increasing the transmission data rate even further in order to break the barrier set by the Shannon limit. Traditional communication methods fall short of the 6G goals, paving the way for Semantic Communication (SemCom) systems. These systems find applications in wide range of fields such as economics, metaverse, autonomous transportation systems, healthcare, smart factories, etc. In SemCom systems, only the relevant information from the data, known as semantic data, is extracted to eliminate unwanted overheads in the raw data and then transmitted after encoding. In this paper, we first use the shared knowledge base to extract the keywords from the dataset. Then, we design an auto-encoder and auto-decoder that only transmit these keywords and, respectively, recover the data using the received keywords and the shared knowledge. We show analytically that the overall semantic distortion function has an upper bound, which is shown in the literature to converge. We numerically compute the accuracy of the reconstructed sentences at the receiver. Using simulations, we show that the proposed methods outperform a state-of-the-art method in terms of the average number of words per sentence.
\end{abstract}

\begin{IEEEkeywords}
Semantic Communications, Knowledge Base, 6G, Data Compression, Wireless Communications 
\end{IEEEkeywords}
\vspace{-.2cm}
\section{Introduction} \label{Sec:Intro}
\vspace{-.1cm}
As per the prediction in~\cite{rajatheva2020white}, semantic communication (SemCom) technology is identified as one of the key ingredients in 6G due to the requirement of low latency and high data rate transmissions. The recent emergence of SemCom technologies finds applications in wide range of fields such as economics~\cite{liew2022economics}, metaverse~\cite{ismail2022semantic}, autonomous transportation systems~\cite{yang2022semanticedge},  smart factories~\cite{luo2022semantic}, and so on.
In SemCom, we only transmit useful and necessary information to the recipients. The semantic extraction (SE) is a process wherein the useful and necessary features are extracted from the original raw data. For example, the essential speech features are extracted using an attention-based mechanism in~\cite{weng2021semantic,weng2021semantic2,tong2021federated}.  

During critical applications such as military operations, search operations by forest personnel in a dense forest, medical emergencies in remote areas, fire incidents in a remote agricultural land, the release of water from a nearby dam, etc., only the essential information needs to be communicated on an urgent basis. The messages could be in the form of text or audio and they come from a limited dataset. 
In a non-critical application, such as broadcasting a text/audio summary of commentary provided by live football commentators. Among all the words spoken by them, only a limited set of useful or important words are relevant to the game. These words are drawn from a limited dataset such as football vocabulary~\cite{footballvocab} which includes words such as \textit{goal, player names, red card, football, score, assist, half-time}, etc. This limited dataset provides an opportunity, in the context of SemCom design, for a significant overhead reduction by extracting and processing only the relevant keywords. For example, an uttered commentary sentence is:  `Ronaldo shoots the ball into the right-bottom of the net and it's a goal!'  The extracted keywords in this example are \textit{Ronaldo, shoots, ball, right-bottom, net, goal}. Only these keywords are transmitted in place of the entire sentence, and the receiver reconstructs a meaningful sentence. The reconstructed sentence in this case is: `Ronaldo shoots the ball into the right-bottom of the net to score a goal.' This sentence is not exactly the same as the original sentence, but it conveys the same meaning.

The main goal of this paper is to use SemCom technology to reduce communication overhead, in the context of natural language processing (NLP) problems, while maintaining a certain minimum accuracy in wireless communication systems. The overhead reduction is performed with high accuracy in the literature~\cite{xie2021deep,xie2020lite}. However, in some applications, high data rates are preferred over high accuracy. As a result, we present the results of the trade-off between overhead reduction and accuracy. Model parameters are chosen based on the context.
Instead of transmitting raw data, the transmitter is designed to transmit semantic data, which significantly reduces network data traffic. A knowledge base (KB) is a technology that collects, stores, and manages data. A knowledge graph (KG) is a KB that integrates data using a graph-structured topology. They are used to store interconnected event descriptions. These are used to predict the missing words in the received data (keywords) to construct a meaningful sentence.

The organization of the paper is as follows: A brief literature review on SemCom technologies is provided in Section~\ref{Sec:RelatedWork}. We introduce our proposed system model in Section~\ref{Sec:SysModel} and provide a few useful simulation results in Section~\ref{Sec:Simulations}. Finally, we conclude the paper in Section~\ref{Sec:Conclusions}.
\vspace{-.2cm}
\section{Related Work} \label{Sec:RelatedWork}
\vspace{-.2cm}
The following state-of-the-art survey papers provide in-depth discussions on various SemCom technologies and their applications~\cite{yang2022semantic,qin2021semantic,lan2021semantic}. Deep learning based SemCom technologies are proposed in ~\cite{xie2021deep,xie2020lite}. A brief tutorial on the
framework of SemCom and a method to calculate a bound on semantic data compression is provided in~\cite{niu2022towards}. The SemCom technology wherein both transmitter and receiver are empowered with the capability of contextual reasoning is proposed in~\cite{seo2021semantics}. The SemCom technology for a system where transmitter and receiver speak different languages is designed in~\cite{sana2022learning}. In~\cite{wang2022performance}, a SemCom framework for textual data transmission is proposed. In this framework, semantic information is represented by a KG made up of a set of semantic triples and the receiver recovers the original text using a graph-to-text generation model. 
All of these works focused on achieving an overhead reduction without compromising the accuracy of the received data. None of these works investigated the possibility of further overhead reduction, thereby improving transmission data rates, while sacrificing a little accuracy. This issue is addressed in this paper using a shared knowledge base.  

A significant research on the usage of KBs and KGs is carried out in the field of natural language processing (NLP). A survey paper based on KG is presented in~\cite{ji2021survey}. Similarly, another survey paper on KB text generation is presented in~\cite{yu2022survey}. A method to generate a summary of sentences by using a given set of keywords is proposed in~\cite{li2020keywords}. Similarly, a method to generate a summary of sentences by using a  knowledge base is shown in~\cite{huang2020knowledge}. Recently, KGs are utlized in the context of SemCom design~\cite{wang2022performance,zhou2022cognitive,liang2022life}. But these works do not focus on the issue presented in this paper, which is to design a SemCom system with a significant overhead reduction with a little compromise on accuracy.   
\section{System Model}\label{Sec:SysModel}
\begin{figure*}
\centering
\includegraphics[width=0.95\textwidth]{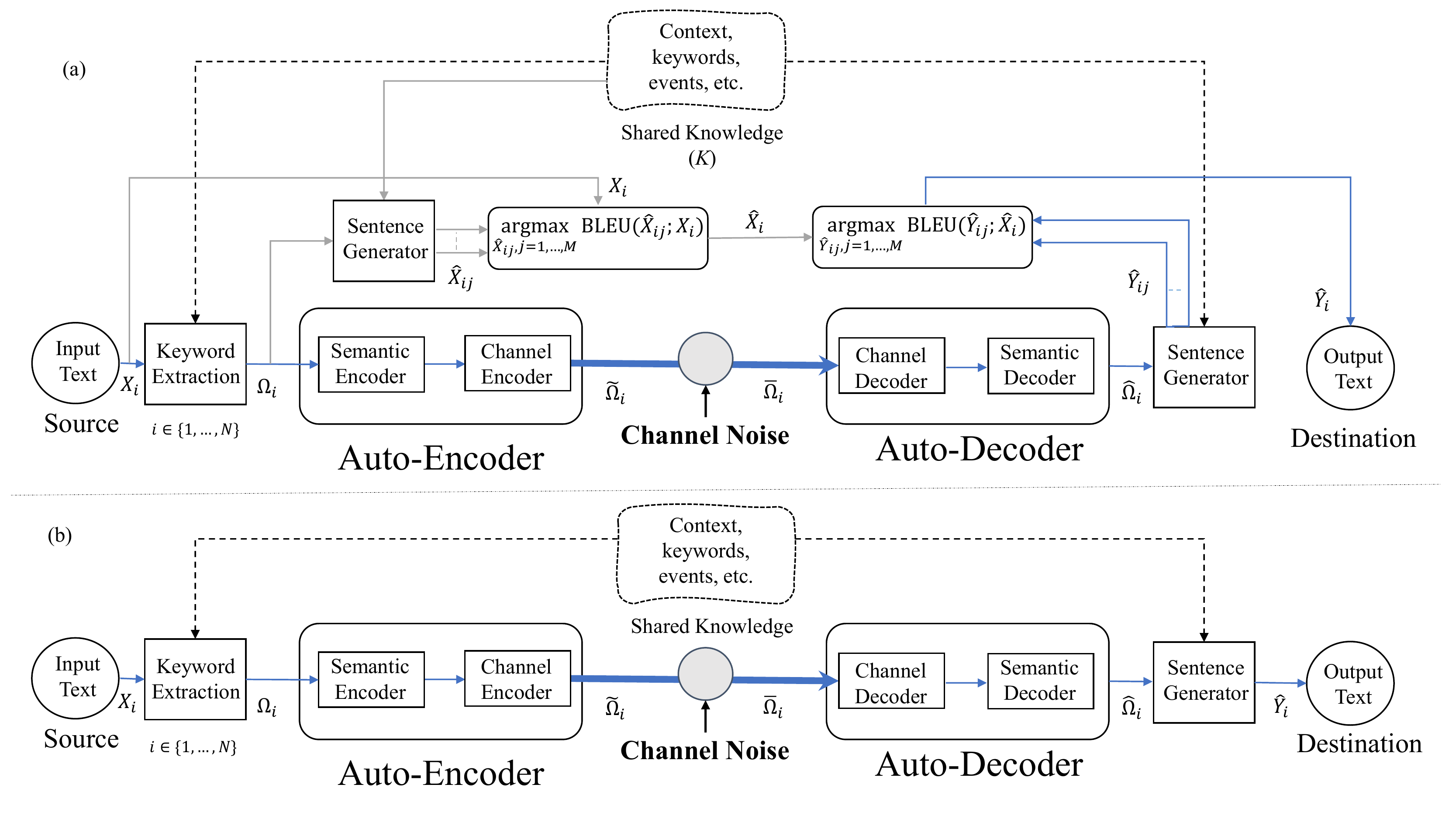}
    \caption{\small The block diagram of our proposed SemCom system model. The model in Fig. (a) is used for training the system parameters and the model in Fig. (b) is used for evaluating the system model.}
    \vspace{-.5cm}
\label{fig:SemComDesignModel}
\end{figure*} 
The system model of the proposed SemCom system is shown in Fig.~\ref{fig:SemComDesignModel}. Let $X$ be the input text dataset with $N$ sentences, $X_i$ be the $i^{th}, i\in \{1, \ldots, N\}$, sentence of $X$, and $K$ be the shared knowledge base (KB). First, we extract the keywords from $X$ using $K$. Let the total set of keywords be $\Omega = \bigcup_{i=1}^N \Omega_i$, where $\Omega_i$ denotes the set of keywords present in $X_i$. The keyword extraction process at every sentence $X_i, i\in \{1, \ldots, N\},$ is executed by multiplying it with a binary vector $b_i = [b_i(\ell), \ell = \{1, \ldots, |X_i|\}]$,\footnote{$|\mathcal{A}|$ denotes the cardinality of set $\mathcal{A}$.} which is defined as follows:
\begin{align}
b_i(\ell) &\triangleq {
\begin{cases} {1,}& {\text{if $\ell^{th}$ word of $X_i$, $X_i(\ell)$, is a keyword in $K$}} \\ 0,& \text{else}.
\end{cases}}
\end{align}
Hence, $\Omega_i$, $i\in \{1, \ldots, N\}$, is obtained by collecting the non-zero elements from $X_i \odot b_i$, where $\odot$ is a word-wise multiplication operator. Here $X_i \odot b_i \triangleq [X_i(\ell) b_i(\ell), \forall \ell = \{1, \ldots, |X_i|\}]$.\footnote{For ease of understanding, let us consider the example discussed in Section~\ref{Sec:Intro}. Let $X_i$ be `Ronaldo shoots the ball into the right-bottom of the net and it's a goal!'. If the set of keywords present in $X_i$ is \{\textit{Ronaldo, shoots, ball, right-bottom, net, goal}\} then $b_i = [1 1 0 1 0 0 1 0 0 1 0 0 0 1]$. Now, $X_i \odot b_i$ gives $[\textit{Ronaldo, shoots, 0, ball, 0, 0, right-bottom, 0, 0, net, 0, 0, 0, goal}]$. Next, $\Omega_i$ is obtained by collecting the non-zero elements, i.e., $\Omega_i = \{\textit{Ronaldo, shoots, ball, right-bottom, net, goal}\}$.}

Now, let us define the quantity BLEU score (bilingual evaluation understudy~\cite{papineni2002bleu}) to compare the similarities between two sentences quantitatively. 
The BLEU$(s,\hat{s}) \in [0,1]$ score between transmitted sentence $s$ and reconstructed sentence $\hat{s}$ is computed as follows:
\begin{equation}
    \text{BLEU}(s,\hat{s}) = \text{BP}(s,\hat{s}) \exp\left( \sum_{n=1}^{W} {w_n \ln{p_n (s,\hat{s})}}\right),
\end{equation}
where $p_n$ denotes the modified $n$-gram precision function up to length $W$, $w_n$ denotes the weights, and brevity penalty (BP) is given by the following expression:
\begin{align}
\text{BP}(s,\hat{s}) &={
\begin{cases} {1}& {\ell_c > \ell_r} \\ e^{1-\ell_r/\ell_c}& \ell_c \le \ell_r,
\end{cases}}
\end{align}
where $\ell_c$ is the length of the candidate translation and $\ell_r$ is the effective reference corpus length~\cite{papineni2002bleu}.

Let $\xi$ be a function which generates a set of $M$ (say) sentences from a given set of keywords with the help of a given knowledge base. Using the keywords in $\Omega_i, b_i$, $i\in \{1, \ldots, N\}$, and the knowledge $K$, for a given sentence $X_i$, the sentence generator at the transmitter generates a set of $M$ sentences using the function $\xi_\lambda$, where $\lambda$ is a parameter. Let that set of sentences be $\widehat{X}_{ij}, j=\{1, \ldots, M\}$. So,
\begin{equation}
    \widehat{X}_{ij} = \xi_\lambda(\Omega_i, K), ~j=\{1, \ldots, M\}.
\end{equation}
Next, out of these $M$ sentences we choose the most semantically equivalent sentence based on the BLEU scores~\cite{papineni2002bleu} compared with input sentence $X_i$, i.e., 
\begin{equation}
    \widehat{X}_i = \argmax_{\widehat{X}_{ij}, j=1, \ldots, M}\text{BLEU}(\widehat{X}_{ij}; X_i),~i\in \{1, \ldots, N\}.
\end{equation}
Note that the set of sentences $\widehat{X} = \{\widehat{X}_i, i\in \{1, \ldots, N\}\}$, is generated at the transmitter during the training process only and it is shared with the receiver a priori. Next, $i^{th}$ keyword set $\Omega_i$ is encoded using the auto-encoder which consists of semantic and channel encoders. The auto-encoder uses the binary vector $b_i, i\in \{1, \ldots, N\},$ to assign a common symbol to all non-keywords of $X_i$ and a unique symbol to keywords of $X_i$, respectively. This enables the receiver to construct appropriate sentences from the received symbol sets. Let us denote $\mathscr{S}_{\theta_e}$ and $\mathscr{C}_{\phi_e}$ as the semantic and channel encoders with $\theta_e$ and $\phi_e$ as the parameters sets, respectively. After encoding $\Omega_i$, we get the following set of symbols:
\begin{equation}
    \widetilde{\Omega}_i = \mathscr{C}_{\phi_e} (\mathscr{S}_{\theta_e}(\Omega_i)),~i\in \{1, \ldots, N\}. 
\end{equation}

The encoded set of symbols $\widetilde{\Omega}_i$ is transmitted via the AWGN (additive white Gaussion noise) channel. Let $h$ be the channel gain and $\eta$ be the noise which gets added to $\widetilde{\Omega}_i$ during transmission. So, the set of received symbols at the receiver is $\overline{\Omega}_i = h\widetilde{\Omega}_i + \eta$. After receiving, this set of symbols is decoded using the auto-decoder which consists of channel and semantic decoders. Let us denote $\mathscr{C}_{\phi_d}$ and $\mathscr{S}_{\theta_d}$ as the channel and semantic decoders with $\phi_d$ and $\theta_d$ as the parameters sets, respectively. After decoding $\overline{\Omega}_i$, we get the following set of keywords:
\begin{equation}
    \widehat{\Omega}_i = \mathscr{S}_{\theta_d} (\mathscr{C}_{\phi_d}(\overline{\Omega}_i)),~i\in \{1, \ldots, N\}.
\end{equation}

From the decoded set of keywords and the shared knowledge $K$, the sentence generator at the receiver generates a set of $M$ sentences using the function $\xi_\mu$, where $\mu$ is a parameter, and let that set of sentences be $\widehat{Y}_{ij}, j=\{1, \ldots, M\}$, i.e.,
\begin{equation}
    \widehat{Y}_{ij} = \xi_\mu(\widehat{\Omega}_i, K), ~j=\{1, \ldots, M\}.
\end{equation}
To select the most desired sentence among these sentences, we compute the BLEU scores between  $\widehat{Y}_{ij}, j=\{1, \ldots, M\}$ and the sentence at the transmitter $\widehat{X}_i$, and choose the one which maximizes the BLEU score. That is: 
\begin{equation}
\label{eq:LossMin}
    \widehat{Y}_i = \argmax_{\widehat{Y}_{ij}, j=1, \ldots, M}\text{BLEU}(\widehat{Y}_{ij}; \widehat{X}_i),~i=\{1, \ldots, N\},
\end{equation}
where $\widehat{Y} = \{\widehat{Y}_i, i\in \{1, \ldots, N\}\}$ denotes the desired set of sentences generated at the receiver.

\subsection{Training the System Model}
Let $\mathcal{X}$ be the set of all possible sentences and $p_\mathbb{X}(x), p_\lambda(\widehat{x}),$ and $p_\mu(\widehat{y})$ denote the probability distributions of sentences $\mathbb{X} \in X, \widehat{\mathbb{X}} \in \widehat{X},$ and $\widehat{\mathbb{Y}} \in \widehat{Y}$, respectively, for all $x,\widehat{x},\widehat{y} \in \mathcal{X}$.\footnote{Note that $\widehat{x},\widehat{y} \in \overline{\mathcal{X}} \subseteq \mathcal{X}$. To obtain the closed-form expressions for certain quantities, we relax the condition and assume $\overline{\mathcal{X}} = \mathcal{X}$.} The  sentences $\widehat{\mathbb{X}}$ and $\widehat{\mathbb{Y}}$ are generated by the sentence generators in transmitter and receiver, respectively, parameterized by $\lambda$ and $\mu$, respectively, with the help of shared knowledge $K$ (see Fig.~\ref{fig:SemComDesignModel}(a)). Note that, since $\widehat{\mathbb{X}}$ and $\widehat{\mathbb{Y}}$ are generated with the help of knowledge $K$, throughout the paper, the conditional distributions $p_\lambda(\cdot|\cdot)$ and $p_\mu(\cdot|\cdot)$ are conditioned on the event $\mathrm{K}=k$, $\forall k \in K$.

Let $H(\mathrm{X})$ and $\mathcal{D}_{KL}(p||q)$ represent, respectively, the entropy of a random variable $\mathrm{X}$ whose probability distribution is $p$ and the Kullback Leibler (KL) divergence between the probability distributions $p$ and $q$. These quantities are defined as follows~\cite{cover1999elements}:
    \begin{align}
        H(\mathrm{X}) &\triangleq -\sum_{x \in \mathcal{X}}  p(x) \log p(x), \label{eq:entropy}\\
        \mathcal{D}_{KL}(p||q) &\triangleq \sum_{x \in \mathcal{X}} p(x) \log \frac{p(x)}{q(x)}. \label{eq:KLDiv}
    \end{align}
The overall cross entropy (CE) loss measures the difference between the actual probability distribution at the input and
the estimated probability distribution at the output and it can be minimized using the stochastic gradient descent
(SGD) methods~\cite{yao2020negative}. So the overall cross entropy (CE) loss is defined as follows~\cite{goodfellow2016deep}:
\begin{align}
\label{eq:LossCE}
    \mathcal{L}^{CE}(\mu) \triangleq H(\mathbb{X}) + \mathcal{D}_{KL}(p_\mathbb{X}(x)||p_\mu(\widehat{y}|k)).
\end{align}
By using the expressions of~\eqref{eq:entropy} and~\eqref{eq:KLDiv}, we simplify the expression for overall CE loss as follows:
\begin{equation}
\label{eq:LossCE_New}
    \mathcal{L}^{CE}(\mu) = -\sum_{x \in \mathcal{X}}  p_\mathbb{X}(x) \log p_\mu(\widehat{y}|k).
\end{equation}

From Fig.~\ref{fig:SemComDesignModel}, we observe that there are information losses at auto-encoder, auto-decoder, and sentence generation blocks both in transmitter and receiver. We aim to minimize the summation of all these losses. The characterization of these losses are as follows.
\begin{itemize}

    \item The loss of information between sentences $\mathbb{X}$ and $\widehat{\mathbb{X}}$ at the transmitter is measured using the CE loss, i.e.,
    \begin{align}
        \mathcal{L}^{CE}_1 (\lambda) &= H(\mathbb{X}) + \mathcal{D}_{KL}(p_\mathbb{X}(x)||p_\lambda(\widehat{x}|k)) \\
        &= -\sum_{x \in \mathcal{X}} p_\mathbb{X}(x) \log p_\lambda(\widehat{x}|k). \label{eq:Loss1}
    \end{align}

    \item Similarly, the loss of information between sentences $\widehat{\mathbb{X}}$ and $\widehat{\mathbb{Y}}$ at the receiver is also measured using the CE loss, i.e.,
    \begin{align}
        \mathcal{L}^{CE}_2 (\mu,\lambda) &= H(\widehat{\mathbb{X}}|K) + \mathcal{D}_{KL}(p_\lambda(\widehat{x}|k)||p_\mu(\widehat{y}|k)) \\
        &= -\sum_{\widehat{x} \in \mathcal{X}} p_\lambda(\widehat{x}|k) \log p_\mu(\widehat{y}|k). \label{eq:Loss2}
    \end{align}

    \item Lastly, the loss of information in the channel is measured in terms of mutual information (MI) between transmitted symbols and received symbols, i.e.,
    \begin{equation}
        \mathcal{L}^{MI}_3 (\theta_e,\phi_e,\theta_d,\phi_d) = \text{I}(\widetilde{\Omega}; \overline{\Omega}). \label{eq:Loss3}
    \end{equation}
\end{itemize}
Now, we define the overall semantic distortion function as follows:\footnote{We use $(\cdot)$ in place of parameters, wherever convenient, for ease of representation.}
\begin{equation}
\label{eq:Loss_overall}
    \mathcal{L} (\theta_e,\phi_e,\theta_d,\phi_d,\lambda,\mu) \triangleq  \mathcal{L}^{CE}_1 (\cdot) +  \mathcal{L}^{CE}_2 (\cdot) -\gamma \mathcal{L}^{MI}_3 (\cdot),
\end{equation}
where $\gamma \ge 0$ is a hyper-parameter. In Theorem~\ref{Thm:CEloss},
we show that the overall semantic distortion $\mathcal{L}(\cdot)$ attains an upper bound which can be optimized. We aim to compute the optimal parameters of  auto-encoder, auto-decoder, and sentence generation blocks. These blocks are characterized by the parameters $\theta_e, \phi_e, \theta_d, \phi_d, \lambda,\mu$ and are obtained by training the system model, as shown in Fig.~\ref{fig:SemComDesignModel}(a), by minimizing the overall loss function defined in~\eqref{eq:Loss_overall}. 

\begin{theorem} \label{Thm:CEloss}
The overall semantic distortion function attains the following upper-bound:
    \begin{equation}
        \mathcal{L} (\theta_e,\phi_e,\theta_d,\phi_d,\lambda,\mu) \le \mathcal{L}^{CE}(\cdot) - \gamma \mathcal{L}^{MI}_3 (\cdot) = \mathrm{B}.
    \end{equation}
\end{theorem}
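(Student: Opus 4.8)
The plan is to reduce the stated bound to a single scalar inequality between cross-entropy terms. Since the mutual-information term $\gamma\mathcal{L}^{MI}_3(\cdot)$ appears identically on both sides of the claimed inequality, it cancels, and it suffices to prove $\mathcal{L}^{CE}_1(\cdot)+\mathcal{L}^{CE}_2(\cdot)\le\mathcal{L}^{CE}(\cdot)$. First I would invoke the relaxation $\overline{\mathcal{X}}=\mathcal{X}$ from the footnote so that $p_\mathbb{X}$, $p_\lambda$, and $p_\mu$ are all distributions on the common alphabet $\mathcal{X}$, and then rewrite the three losses in their summation forms \eqref{eq:Loss1}, \eqref{eq:Loss2}, and \eqref{eq:LossCE_New}.

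Next I would substitute the entropy-plus-divergence decompositions and track the cancellations. The shared term $H(\mathbb{X})$ cancels between $\mathcal{L}^{CE}_1$ and $\mathcal{L}^{CE}$, and the entropy $H(\widehat{\mathbb{X}}|K)$ carried explicitly by $\mathcal{L}^{CE}_2$ cancels against the entropy hidden inside $\mathcal{D}_{KL}(p_\lambda(\widehat{x}|k)||p_\mu(\widehat{y}|k))$, leaving a pure cross-entropy chaining statement, namely $-\sum_{x}p_\mathbb{X}(x)\log p_\lambda(\widehat{x}|k)-\sum_{\widehat{x}}p_\lambda(\widehat{x}|k)\log p_\mu(\widehat{y}|k)\le-\sum_{x}p_\mathbb{X}(x)\log p_\mu(\widehat{y}|k)$. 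To establish this I would exploit the generative structure of the model: both $\widehat{\mathbb{X}}$ and $\widehat{\mathbb{Y}}$ are produced from the same extracted keyword set $\Omega$ and the shared base $K$ (through $\xi_\lambda$ and $\xi_\mu$, respectively), which furnishes a Markov relation that I would feed into the data-processing inequality for KL divergence, combined with Gibbs' inequality $\mathcal{D}_{KL}(\cdot||\cdot)\ge 0$.

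The hard part will be exactly this chaining step, i.e.\ controlling the indefinite cross term $\sum_{x}\left[p_\mathbb{X}(x)-p_\lambda(\widehat{x}|k)\right]\log p_\mu(\widehat{y}|k)$, because relative entropy does not obey a triangle inequality and so $\mathcal{D}_{KL}(p_\mathbb{X}||p_\mu)$ cannot be split into the two stage-wise divergences for free. I expect this to require leaning on the conditional independence of $\widehat{\mathbb{X}}$ and $\widehat{\mathbb{Y}}$ given $(\Omega,K)$, and on the premise that the transmitter generator is an accurate surrogate for the source so that $p_\lambda(\cdot|k)$ tracks $p_\mathbb{X}$, in order to absorb that cross term. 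Once the scalar inequality is secured, adding $-\gamma\mathcal{L}^{MI}_3(\cdot)$ back to both sides yields $\mathcal{L}(\cdot)\le\mathcal{L}^{CE}(\cdot)-\gamma\mathcal{L}^{MI}_3(\cdot)=\mathrm{B}$ as asserted in Theorem~\ref{Thm:CEloss}, with the convergence of the resulting bound $\mathrm{B}$ itself taken from the cited literature rather than re-derived here.
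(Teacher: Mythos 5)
Your reduction is the same one the paper uses: the mutual--information term is carried through unchanged on both sides, so everything hinges on comparing $\mathcal{L}^{CE}_1(\cdot)+\mathcal{L}^{CE}_2(\cdot)$ with $\mathcal{L}^{CE}(\cdot)$. Where you and the paper part ways is in how that comparison is settled. The paper does not invoke data processing, Gibbs' inequality, or any Markov structure through $(\Omega,K)$; it introduces the log--ratio $\delta(\lambda,\mu,k)=\log\bigl(p_\mu(\widehat{y}|k)/p_\lambda(\widehat{x}|k)\bigr)$, treats it as a constant that can be added to the first sum and subtracted from the second, and thereby obtains the \emph{exact identity} $\mathcal{L}^{CE}_1(\cdot)+\mathcal{L}^{CE}_2(\cdot)=\mathcal{L}^{CE}(\cdot)+H(\widehat{\mathbb{X}}|K)$ (this is \eqref{eq:CEProof4}); the stated bound is then obtained by discarding $H(\widehat{\mathbb{X}}|K)$.

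The genuine gap in your proposal is the chaining step you defer, and the identity above shows it cannot be closed in the direction you want: since $H(\widehat{\mathbb{X}}|K)\ge 0$ for a discrete source, the cross term you were trying to absorb equals exactly $+H(\widehat{\mathbb{X}}|K)$, so $\mathcal{L}^{CE}_1+\mathcal{L}^{CE}_2\ge\mathcal{L}^{CE}$ rather than $\le$. No appeal to the data--processing inequality, to $\mathcal{D}_{KL}(\cdot\|\cdot)\ge 0$, or to conditional independence of $\widehat{\mathbb{X}}$ and $\widehat{\mathbb{Y}}$ given $(\Omega,K)$ can reverse this, because all of those tools produce one--sided statements pointing the opposite way; your instinct that the triangle--type splitting of the KL divergence is the obstruction was sound, but the obstruction is not merely technical. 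The desired inequality holds only in the degenerate case $H(\widehat{\mathbb{X}}|K)=0$, i.e., when the transmitter's generated sentence is deterministic given the knowledge base (note also that $\delta$ being independent of $\widehat{x},\widehat{y}$, as the paper's manipulation requires, forces $p_\lambda(\cdot|k)=p_\mu(\cdot|k)$ once both normalize). The same issue affects the paper's own final step from \eqref{eq:CEProof4} to \eqref{eq:CEProof5}: dropping a non--negative term yields a \emph{lower} bound, so what is actually established is $\mathcal{L}(\cdot)\ge\mathrm{B}$, and the correct upper bound is $\mathrm{B}+H(\widehat{\mathbb{X}}|K)$. Either that additional entropy term must be retained, or the determinism assumption must be made explicit, before your plan (or the paper's proof) goes through.
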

\begin{proof}
The proof is given in Appendix.
\end{proof}

\begin{remark}
The upper-bound $\mathrm{B}$, provided in Theorem~\ref{Thm:CEloss}, is proved to be optimized using the SGD algorithms~\cite{yao2020negative}. Also, in the published works~\cite{xie2021deep,sana2022learning}, the semantic distortion of the overall system is minimized using a similar expression as that of $\mathrm{B}$. 
\end{remark}
Hence, to minimize the overall semantic distortion defined in~\eqref{eq:Loss_overall}, we seek to minimize the upper-bound $\mathrm{B}$ provided in Theorem~\ref{Thm:CEloss}. The loss due to mutual information
$\mathcal{L}^{MI}_3 (\cdot)$ can be estimated using state-of-the-art mutual information
neural estimator (MINE)~\cite{belghazi2018mutual}.
\subsection{Accuracy versus Overhead Reduction Trade-off}
Given the limited size of knowledge base, though the accuracy of the reconstructed sentences in $\widehat{Y}$ may not be sufficiently high, the useful content in those sentences is summarized and conveyed to the receiver. This novel approach saves a significant amount of overhead. 

There exists a trade-off between overhead reduction and the accuracy that depends on the size of the knowledge base $K$. For example, if the set $K$ is small, only a few keywords are extracted, encoded, and transmitted from the given input sentences in $X$, implying a higher amount of average overhead reduction. On average, this results in a large amount of missing information, so the accuracy of the reconstructed sentences in $\widehat{Y}$ is expected to be low. 
On the other hand, if the set $K$ is large, a significant number of keywords are extracted from the given sentences in $X$, encoded, and transmitted, implying a lower average overhead reduction. On average, this results in a small amount of missing information, so the accuracy of the reconstructed sentences in $\widehat{Y}$ is expected to be high. This phenomenon is numerically shown in Section~\ref{Sec:Simulations}.

So, we aim at minimizing the transmission of average number of words per sentence (equivalent to maximizing the average overhead reduction) by keeping a certain minimum accuracy information $\tau$ in the received sentence, i.e.,
\begin{align}
    \min & ~ \frac{1}{N}\sum_{i=1}^N  |\Omega_i| \label{eq:min_words} \\
     &\text{BLEU}(\widehat{Y}_i;X_i) \ge \tau,~i=\{1, \ldots, N\}, \label{eq:tau_thr}
\end{align}
where $|\Omega_i|$ denotes the number of keywords in $\Omega_i$ that corresponds to sentence $X_i$.


\subsection{Shared Knowledge Base}
We generate the shared knowledge base $K$ by using the keywords from a limited dataset $\Omega$ which consists of only the relevant words of a particular event, like that of a football game in our case.  We assume that both the transmitter and receiver have access to $K$. It is shown in~\cite{bao2011towards} that the capacity of the channel can be increased beyond Shannon's limit by using a semantic encoder with low semantic ambiguity and a semantic decoder with strong inference ability and a large shared knowledge base. 
From Section~\ref{Sec:SysModel}, recall that in every sentence, only the words $w \in \Omega$ are uniquely encoded and transmitted to the receiver in their corresponding time slots. At other time slots, a common symbol is transmitted. By utilizing $K$, the receiver reconstructs the sentence based on the received symbols. To improve the accuracy of the reconstructed sentences, we can increase the size of $K$ by adding more keywords from the vocabulary generated using $X$. 
This result is shown using simulations in Section~\ref{Sec:Simulations}. 

\section{Simulation Results}\label{Sec:Simulations}
\begin{table}[]
\caption{Simulation parameters}
\vspace{-.2cm}
    \centering
    \begin{tabular}{|l|l|}
    \hline
        Number of matches used in training & 1580 \\ \hline
        Number of matches used in evaluation & 340 \\ \hline
        Number of epochs during training & 10 \\ \hline
        SNR & 6 dB \\
        \hline
        Learning rate & 0.001 \\
        \hline
        Batch Size & 64 \\
        \hline
        Channel & AWGN \\ \hline
    \end{tabular}
    \label{tab:sim_param}
    \vspace{-.4cm}
\end{table}
First, we evaluate the performance of the text data transmission in terms of accuracy using BLEU score~\cite{papineni2002bleu}.\footnote{We defined the BLEU score in Section~\ref{Sec:SysModel}.} In our work, we use the dataset provided in~\cite{zhang2021soccer}. We parse the football commentary data of 1920 matches from the website \url{goal.com}. The considered football matches are from Union of European Football Associations (UEFA) Champions League, UEFA Europa League, and Premier League between 2016 and 2020. 
The simulation parameters used for plots in this section are shown in Table~\ref{tab:sim_param}. The simulations are performed in a computer with NVIDIA GeForce RTX 3090 GPU and Intel Core i9-10980XE CPU with 256GB RAM. 

Let $\rho$ be the fraction of the total vocabulary $V$, which contains all the dataset words, to be added to $K$. $\rho=0$ indicates that no additional vocabulary is added and the system is evaluated  only with the initial keyword set $\Omega_0$. Based on the way of adding the vocabulary words to $\Omega_0$, we propose two types of schemes. In the first type, $\rho |V|$ vocabulary words are uniformly chosen at random from $V$ and added to $K$. In the second type, the words in $V$ are first arranged in the decreasing order of the frequency of appearances in the dataset, and then the first $\rho |V|$ vocabulary words are added to $K$. We call these schemes as `RANDOM' and `ORDERED', respectively. 

The accuracy performances of both the schemes and a deep learning based SemCom system method named DeepSC~\cite{xie2021deep}, in terms of BLEU score vs. $\rho$, are shown in Fig.~\ref{fig:BLEU}. From the plot we can infer that even with $\rho=0$, the initial keyword set can produce a BLEU score of 0.55 (for 1-gram). This shows that the context-related keywords produce good results. Also, we see that as we add more vocabulary words to $\Omega_0$, the BLEU score increases. 
For the same value of $\rho$ and $n$, the ORDERED scheme performs better than the RANDOM scheme because of the addition of high frequency words.
And, in terms of different $n$-grams, BLEU score decreases as $n$ increases, which is an expected result. In comparison to the DeepSC scheme, the proposed schemes perform poorly in terms of accuracy but outperform it in terms of overhead reduction, as shown below.  

Next, we evaluate the performance of the proposed schemes, in terms of the transmission of average number of words per sentence, with respect to DeepSC~\cite{xie2021deep} and the results are shown in Fig.~\ref{fig:w_bar_vs_rho}. Let $\overline{W}$ denote the average number of words per sentence. From the plot we observe that both the schemes outperform DeepSC. Among the proposed schemes, for a given $\rho$ the RANDOM scheme outperforms the ORDERED scheme. This is because, in the ORDERED scheme high frequency words are added which increases the number of words to be encoded in the input data as compared with the RANDOM scheme. 

Now, we solve the optimization problem presented in~\eqref{eq:min_words} and~\eqref{eq:tau_thr} using both the proposed schemes. For this purpose, we evaluate $\overline{W}$ vs. $\tau$ and the results are shown in Fig.~\ref{fig:w_bar_vs_tau}. From the plot we observe that both the schemes outperform DeepSC. Also, we see that the performance of both the schemes is same for a given accuracy threshold $\tau$. This is because, as shown in Fig.~\ref{fig:BLEU}, for a given value of $\rho \in (0,1)$, the ORDERED scheme outperforms the RANDOM scheme in terms of accuracy, whereas in Fig.~\ref{fig:w_bar_vs_rho}, the RANDOM scheme outperforms the ORDERED scheme in terms of overhead reduction. Hence, we can choose any one of the proposed methods to solve the optimization problem. 
\begin{figure}
\centering
\resizebox{0.9\columnwidth}{!}
{\includegraphics{./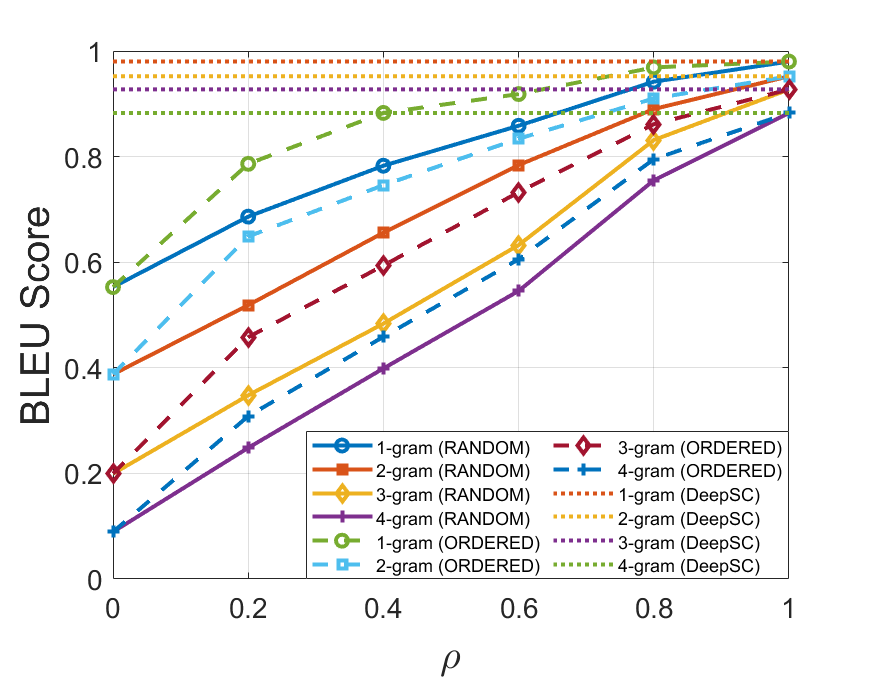}} 
\caption{\small This plot shows the BLEU score vs. $\rho$ for different values of $n$-grams, where $n=\{1,2,3,4\}$, for the proposed schemes and the DeepSC scheme~\cite{xie2021deep}.}
\vspace{-.4cm}
\label{fig:BLEU}
\end{figure}



\begin{figure}
\centering
\begin{subfigure}{.24\textwidth}
\centering
\begin{adjustbox}{width = 1\columnwidth}
\includegraphics[width=0.99\textwidth]{./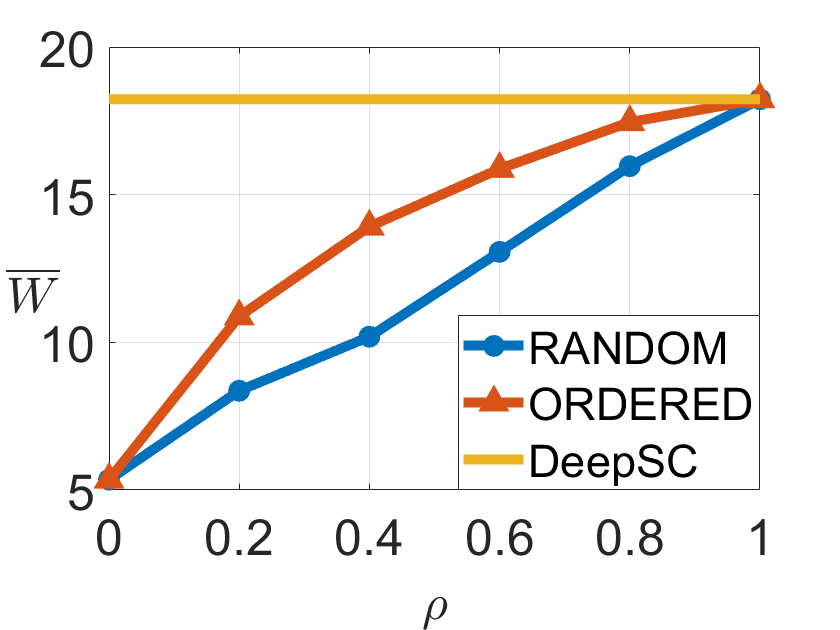}
\end{adjustbox}
\caption{}
\label{fig:w_bar_vs_rho}
\end{subfigure}%
\begin{subfigure}{.24\textwidth}
\centering
\begin{adjustbox}{width = 1\columnwidth}
\includegraphics[width=0.99\textwidth]{./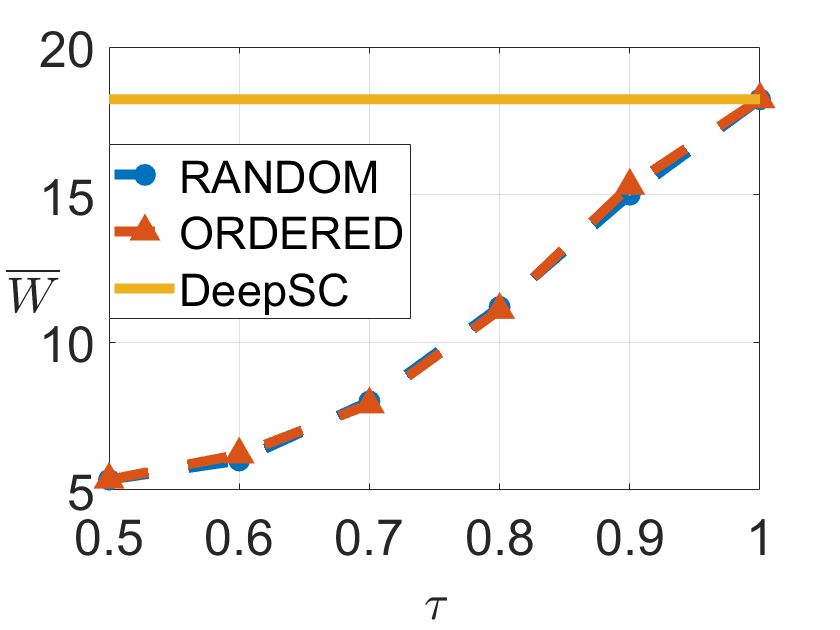}
\end{adjustbox}
\caption{}
\label{fig:w_bar_vs_tau}
\end{subfigure}
\vspace{-.2cm}
\caption{\small These plots show the average number of words per sentence vs. $\rho$ in the left plot and vs. $\tau$ in the right plot, respectively, for the proposed schemes and the DeepSC scheme~\cite{xie2021deep}.}
\vspace{-.4cm}
\label{fig:w_bar_plots}
\end{figure}

\section{Conclusions}\label{Sec:Conclusions}
In this paper, we first extracted relevant keywords from the dataset using the shared knowledge base. Then, using the received keywords and the shared knowledge, we designed an auto-encoder and auto-decoder that only transmit these keywords and, respectively, recover the data. We proved that the overall semantic distortion function has an upper bound, which is shown to be optimized using the SGD algorithms in the literature. We computed the accuracy of the reconstructed sentences at the receiver quantitatively. We demonstrated through simulations that the proposed methods outperform a state-of-the-art method in terms of average number of words per sentence. Furthermore, the proposed approach makes no new hardware modifications to the existing infrastructure. We focused solely on the text dataset; however, similar approaches can be used in the future for other types of datasets such as image, audio, and video. 

\appendix
Now we provide the proof of Theorem~\ref{Thm:CEloss}\label{Apdx:CEloss}. 
Let us define the following term, parameterized by $\lambda$, $\mu$, and $k \in K$:
\begin{equation}
\label{eq:delta_condition}
    \delta({\lambda,\mu,k}) \triangleq  \log \left(\frac{p_\mu(\widehat{y}|k)}{p_\lambda(\widehat{x}|k)}\right), ~\forall \widehat{x},\widehat{y} \in \mathcal{X}.
\end{equation}
From~\eqref{eq:Loss_overall}, we know that
\begin{subequations}
\begin{align}
    \mathcal{L}(\cdot) &= \mathcal{L}_1^{CE}(\cdot) +  \mathcal{L}_2^{CE}(\cdot) - \gamma \mathcal{L}^{MI}_3 (\cdot)\\
    & =-\sum_{x \in \mathcal{X}}  p_\mathbb{X}(x) \log {p_\lambda(\widehat{x}|k)} \nonumber \\
    &-\sum_{\widehat{x} \in \mathcal{X}} p_\lambda(\widehat{x}|k) \log p_\mu(\widehat{y}|k) - \gamma \text{I}(\widetilde{\Omega}; \overline{\Omega}) \label{eq:CEProof1}\\
    & =-\sum_{x \in \mathcal{X}}  p_\mathbb{X}(x) \log \left(p_\lambda(\widehat{x}|k)\frac{p_\mu(\widehat{y}|k)}{p_\mu(\widehat{y}|k)}\right) \nonumber\\
    &-\sum_{\widehat{x} \in \mathcal{X}} p_\lambda(\widehat{x}|k) \log \left(p_\mu(\widehat{y}|k)\frac{p_\lambda(\widehat{x}|k)}{p_\lambda(\widehat{x}|k)}\right) - \gamma \text{I}(\widetilde{\Omega}; \overline{\Omega}) \label{eq:CEProof2}\\
    &= -\sum_{x \in \mathcal{X}}  p_\mathbb{X}(x) \log p_\mu(\widehat{y}|k) + \delta({\lambda,\mu,k})  \nonumber \\
    &-\sum_{\widehat{x} \in \mathcal{X}} p_\lambda(\widehat{x}|k) \log p_\lambda(\widehat{x}|k) - \delta({\lambda,\mu,k})  - \gamma \text{I}(\widetilde{\Omega}; \overline{\Omega}) \label{eq:CEProof3}\\
    &=\mathcal{L}^{CE}(\cdot) + H(\widehat{\mathbb{X}}|K) - \gamma \text{I}(\widetilde{\Omega}; \overline{\Omega}) \label{eq:CEProof4}\\
    & \le \mathcal{L}^{CE}(\cdot) - \gamma \text{I}(\widetilde{\Omega}; \overline{\Omega}). \label{eq:CEProof5}
\end{align}
\end{subequations}
In~\eqref{eq:CEProof1}, we expand the loss function expressions using their respective definitions provided in~\eqref{eq:Loss1},~\eqref{eq:Loss2}, and~\eqref{eq:Loss3}, respectively. In~\eqref{eq:CEProof2}, we multiply and divide $p_\mu(\widehat{y}|k)$ and $p_\lambda(\widehat{x}|k)$ in the first and second terms, respectively. Using~\eqref{eq:delta_condition} and algebraic simplifications, we get~\eqref{eq:CEProof3}. By using~\eqref{eq:LossCE} and the definition of entropy, we write~\eqref{eq:CEProof4}. And, finally the inequality in~\eqref{eq:CEProof5} is due to $H(\widehat{\mathbb{X}}|K) \ge 0$~\cite{cover1999elements}.
\bibliographystyle{ieeetr}
\bibliography{references}

\end{document}